\documentclass[journal,onecolumn,draftclsnofoot ]{IEEEtran}
\usepackage{graphicx}
\usepackage{amssymb}
\usepackage{amsthm}
\usepackage{amsmath}
\newtheorem{theorem}{Theorem}
\newtheorem{lemma}{Lemma}
\newtheorem{assumption}{Assumption}
\usepackage{bm}
\usepackage{color}
\usepackage{algorithm}
\usepackage{algorithmic}
\usepackage{subfig}
\usepackage{epstopdf}
\usepackage{psfrag}
\usepackage{booktabs}
\usepackage{lettrine}
\usepackage[numbers,sort&compress]{natbib}

\DeclareMathOperator*{\Min}{minimize}

\newcommand{\reals}{{\mathbb{R}}}

\newcommand{\vx}{{\bf x}}
\newcommand{\vs}{{\bf s}}

\newcommand{\vz}{{\bf z}}

\newcommand{\vy}{{\bf y}}

\newcommand{\vzero}{{\bf 0}}

\newcommand{\vI}{{\bf I}}

\hyphenation{op-tical net-works semi-conduc-tor}

 \linespread{1}
\begin{document}
%
\title{Fast Signal Recovery from Saturated Measurements by Linear Loss and Nonconvex Penalties}
%
%
%

\author{Fan He, Xiaolin Huang, Yipeng Liu, Ming Yan
\thanks{
This work is supported by National Natural Science Foundation of China, No. 61603248, 61602091, and NSF grant DMS-1621798.

F. He and X. Huang  (corresponding author) are with Institute of Image Processing and Pattern Recognition, Shanghai Jiao Tong University, and the MOE Key Laboratory of System Control and Information Processing, 200240 Shanghai, P.R. China.
{Y. Liu is with  School of Information and Communication Engineering, University of Electronic Science and Technology of China, Chengdu, 611731, China.}
M. Yan is with the Department of Computational Mathematics, Science and Engineering, Michigan State University, MI, USA.
(e-mails: hf-inspire@sjtu.edu.cn, xiaolinhuang@sjtu.edu.cn, yipengliu@uestc.edu.cn,  yanm@math.msu.edu)}}

\markboth{ }%
{F. He \MakeLowercase{\textit{et al.}}: Fast Signal Recovery from Saturated Measurements by Linear Loss and Nonconvex Penalties}

\maketitle

\begin{abstract}

Sign information is the key to overcoming the inevitable saturation error in compressive sensing systems, which causes information loss and results in bias. For sparse signal recovery from saturation, we propose to use a linear loss to improve the effectiveness from existing methods that utilize hard constraints/hinge loss for sign consistency. Due to the use of linear loss, an analytical solution in the update progress is obtained, and some nonconvex penalties are applicable, e.g., the minimax concave penalty, the $\ell_0$ norm, and the sorted $\ell_1$ norm. Theoretical analysis reveals that the estimation error can still be bounded. Generally, with linear loss and nonconvex penalties, the recovery performance is significantly improved, and the computational time is largely saved, which is verified by the numerical experiments.

\end{abstract}

\begin{IEEEkeywords}
compressive sensing, saturation, linear loss, nonconvex penality, ADMM.
\end{IEEEkeywords}

%
\IEEEpeerreviewmaketitle

\section{Introduction}
\lettrine[lines=2]{S}{aturation} is unavoidable in many sensing systems due to the limited range of detectors or analog-to-digital converters (ADC)~\cite{haboba2012pragmatic}.
When there are saturated measurements, the observation is nonlinear, and the performance of algorithms using linear observations degrades.
We model a measurement from a linear system as
$q_i=\phi_i^\top \bar{\vx} + n_i$
, where $\bar \vx \in \reals^N$ is the true signal, $\phi_i \in \reals^N$ is a sensing vector and $n_i$ is the noise.
Then the observation with a bounded-range detector or ADC becomes
\[
y_i = \max\big\{y_{\mathrm{min}}, \min\{y_{\mathrm{max}}, \phi_i^\top \bar{\vx}+n_i \}\big\},
\]
where $y_{\mathrm{max}}$ and $y_{\mathrm{min}}$ are the upper and lower bounds, respectively.
We partition the observations and the sensing matrix into the unsaturated and saturated parts:
\begin{enumerate}
\item The unsaturated part: observations $\mathbf{y}_1\in\mathbb{R}^{M_1}$ and the corresponding sensing matrix $\mathbf{\Phi}_1\in\mathbb{R}^{M_1\times N}$.
Clearly, $y_{\mathrm{min}} < (\mathbf{y}_1)_i = \left(\mathbf{\Phi}_1\bar\vx\right)_i + n_i < y_{\mathrm{max}}$ for $i=1,\dots,M_1$;
\item The saturated part:
observations $\mathbf{y}_2\in\mathbb{R}^{M_2}$ and the corresponding sensing matrix $\mathbf{\Phi}_2\in\mathbb{R}^{M_2\times N}$.
All the observations in this part are out of the range $(y_{\mathrm{min}}, y_{\mathrm{max}})$ and are recorded as $y_{\mathrm{min}}$ or $y_{\mathrm{max}}$.
\end{enumerate}
In addition, an indicator vector $\vs\in\reals^M$ is defined below.
\begin{equation*}
\begin{aligned}
s_i=\left\{\begin{array}{rl}1, & \left(\mathbf{\Phi}\bar\vx\right)_i+n_i  \geq y_{\mathrm{max}},\\
0, & \quad y_{\mathrm{min}} < \left(\mathbf{\Phi}\bar\vx\right)_i+n_i < y_{\mathrm{max}},\\
 -1, & \left(\mathbf{\Phi}\bar\vx\right)_i+n_i \leq y_{\mathrm{min}}. \end{array}\right.
\end{aligned}
\end{equation*}
Similarly, we partition $\vs$ into two parts $\vs_1$ and $\vs_2$. $s_i \in \vs_1$ when $s_i=0$ and $s_i\in\vs_2$ otherwise.

In this letter, we consider signal recovery from saturated measurements, which is hard due to the loss of information. Compressive sensing (CS,~\cite{donoho2006compressed}) is a promising technique for signal recovery from a relatively small number of observations.
It has been insightfully studied and successfully applied in the last decade~\cite{candes2006robust, candes2006near,candes2008introduction, lustig2007sparse, provost2009application}.
However, the traditional CS is not applicable to deal with saturation.
Two methods for saturation in CS are saturation rejection~\cite{jacques2011dequantizing,laska2009finite} and saturation consistency~{\cite{laska2011democracy,liu2014robust,huang2017mixed,Foucart2017Sparse,Foucart2018Sparse}}.
Saturation rejection drops the saturated part and thus may lead to insufficient measurements and poor results.
To increase the accuracy, we need to make good use of the saturated part.
Although the exact values for the saturated observations are unknown, we can put this information in inequality constraints or loss functions.
This motivates many algorithms, especially in the extreme one-bit case ~\cite{boufounos20081,gupta2010sample, yan2012robust, jacques2013robust}.

The linear inequality constraint 	$(\mathbf{s}_2)_i\left(\mathbf{\Phi}_2\bar\vx-\mathbf{y}_2\right)_i\geq 0$ is obviously true for noise-free cases and thus is used in~\cite{laska2011democracy,liu2014robust} to enforce the saturation consistency.
Specifically, robust dequantized compressive sensing (RDCS) is proposed in~\cite{liu2014robust} and takes the following form
\begin{equation*}
\begin{aligned}
\Min\limits_\vx\quad &\textstyle\nu\|\mathbf{x}\|_1+\frac{1}{2}\|\mathbf{\Phi}_1\mathbf{x}-\mathbf{y}_1\|_2^2\\
\mathrm{s.t.}\quad &(\mathbf{s}_2)_i\left(\mathbf{\Phi}_2\mathbf{x}-\mathbf{y}_2\right)_i\geq0,
\end{aligned}
\end{equation*}
where $\nu > 0$ is a parameter.  (RDCS is for both quantization error and saturation error, of which the former is out of the scope of this letter and hence the corresponding items are ignored here.)
When there are changes of the binary observations, namely sign flips, during the measurement and transmission, the constraints are not satisfied by the true signal.
Therefore, RDCS is not robust to sign flips.
Instead of constraints, the hinge loss function is used in mixed one-bit compressive sensing (M1bit-CS)~\cite{huang2017mixed} to encourage the saturation consistency, and the method is robust to noise.
However, the algorithms for both RDCS and M1bit-CS are slow and not applicable to large-scale problems, due to the use of
{hard constraints} or the hinge loss. Also, it is less likely to include non-convex penalties that improve the recovery accuracy of sparse signals.

We propose to use the linear loss for sign consistency and develop fast algorithms with non-convex penalty functions for sparse signal recovery.
We formulate it into a form such that
{the alternating direction method of multipliers }
(ADMM,~\cite{boyd2011distributed}) can be applied.
Then based on the results in~\cite{huang2018nonconvex, zhu2015towards}, the subproblems with non-convex penalties such as the $\ell_0$ norm, the sorted $\ell_1$ norm~\cite{huang2015nonconvex, bogdan2015slope}, and the minimax concave penalty (MCP,~\cite{zhang2010nearly}) have analytical solutions or can be solved easily.

\section{Sparse Signal Recovery from Saturated Measurements}\label{sec:model}
To deal with saturation, we introduce the linear loss instead of the hard constraints and the hinge loss in existing works.
The problem with linear loss is:
\begin{equation}\label{origin}
\begin{aligned}
\Min\limits_{\mathbf{x}\in\mathbb{R}^N}~ &f(\vx)+\frac{1}{2M_1}\|\mathbf{\Phi}_1\mathbf{x}-\mathbf{y}_1\|_2^2-\frac{\gamma}{M_2}\mathbf{s}_2^{\top}(\mathbf{\Phi}_2\mathbf{x}-\mathbf{y}_2)\\
\mathrm{s.t.}~& \|\mathbf{x}\|_2\leq C,
\end{aligned}
\end{equation}
where {$f(\vx)$} is a regularization term for sparsity, $\gamma\geq 0$ is a trade-off parameter between the unsaturated and saturated parts, and $C$ is a given upper bound for $\|\vx\|_2$.
Note that the constraint $\|\mathbf{x}\|_2\leq C$ is crucial when there are many saturated observations, because one-bit information has no capability to distinguish amplitudes.
Here we assume that the $\ell_2$ norm of the true signal is given as $C$.
There are algorithms for estimating the $\ell_2$ norm of the true signal if it is not given~\cite{knudson2016one}.

The key difference between this model and the model in~\cite{huang2017mixed} is the use of linear loss. Thus we call~\eqref{origin} as mixed one-bit CS with linear loss (M1bit-CS-L).
The motivation comes from the good properties of linear loss that it does not bring computational burden in optimization.
Specifically, a subproblem for M1bit-CS-L has a closed-form solution, while that for RDCS and M1bit-CS does not, from which it follows that M1bit-CS-L can be solved as efficiently as standard CS.

Before introducing the algorithms, we show some theoretical results of M1bit-CS-L in terms of the error bound $\|\bar{\vx}-\hat{\vx}\|_2$, where $\hat{\vx}$ is the optimal solution and $\bar{\vx}$ is the true signal.
Assume $\|\bar{\vx}\|_2=1$ and $\mathbb{P}[(\mathbf{\Phi \bar{x}})_i+n_i\notin (y_{\mathrm{min}}, y_{\mathrm{max}})]=p $ without loss of generality.
Then the constraint in~\eqref{origin} is $\|\mathbf{x}\|_2\leq1$.

\begin{assumption}
The true signal $\bar{\vx}$ satisfies that
\begin{equation*}
\begin{aligned}
	\|\mathbf{\Phi \bar{x}}-\vy\|_{\infty}\leq\epsilon\nonumber,
\end{aligned}	
\end{equation*}	
with $\epsilon>0$.
Moreover, the expectation of the noise of the unsaturated part is zero, i.e.,
	\begin{equation*}
	\begin{aligned}
	&\mathbb{E}\left[\mathbf{\Phi}_1\bar{\vx}-\mathbf{y}_1\right]_i=0,\quad i=1,\cdots,M_1.
	\end{aligned}
	\end{equation*}	
\end{assumption}

\begin{assumption}
Each row of $\mathbf{\Phi}$ is an independent realization from a normal distribution,
and the element in $\vs$ is independently drawn at random.

\end{assumption}

Following~\cite{plan2013robust}, we define a function $\eta$ to model the noise in $\vs$ as
\begin{equation*}
	\begin{aligned}
\eta\left((\mathbf{\Phi}\bar{\vx})_i\right) =	\mathbb{E}[s_i|(\mathbf{\Phi}\bar{\vx})_i], \quad i=1,\cdots,M.
	\end{aligned}
\end{equation*}	

\begin{lemma}\label{lemma1} Let
\[
\lambda = \mathbb{E}_{g\sim \mathcal{N}(0,1)}[\eta(g)g],
\] then
\begin{eqnarray*}
\mathbb{E}\left[(\mathbf{\Phi}^{\top})_js_j\right]
& = & \lambda \bar{\vx}, \quad j=1,\cdots,M,\\
\mathbb{E}\left[(\mathbf{\Phi}_2^{\top})_j(\vs_2)_j\right]
& = & \lambda \bar{\vx}/p, \quad j=1,\cdots,M_2.
\end{eqnarray*}
\end{lemma}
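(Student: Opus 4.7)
The plan is to exploit the rotational invariance of the standard Gaussian and reduce everything to a one-dimensional computation. Fix an index $j$ and let $\phi\in\reals^N$ denote the $j$-th row of $\mathbf{\Phi}$ (written as a column). By Assumption~2, $\phi\sim\mathcal{N}(\vzero,\vI_N)$. Since $\|\bar{\vx}\|_2=1$, I would decompose
\begin{equation*}
\phi = (\phi^\top\bar{\vx})\bar{\vx} + \vw, \qquad \vw := (\vI_N-\bar{\vx}\bar{\vx}^\top)\phi.
\end{equation*}
The scalar $g:=\phi^\top\bar{\vx}$ is $\mathcal{N}(0,1)$, while $\vw$ is Gaussian supported in the hyperplane $\bar{\vx}^\perp$. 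A direct covariance calculation shows $\mathbb{E}[g\vw]=\vzero$, and since $(g,\vw)$ are jointly Gaussian this yields independence of $g$ and $\vw$.

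Next I would use the structure of $s=s_j$: it is determined by $(\mathbf{\Phi}\bar{\vx})_j+n_j=g+n_j$ together with the saturation rule, and the noise $n_j$ is independent of $\phi$. Therefore $s$ is a (randomized) function of $g$ only, so $\vw$ is independent of $s$. Taking expectations in the decomposition gives
\begin{equation*}
\mathbb{E}[\phi\, s] \;=\; \mathbb{E}[g\, s]\,\bar{\vx} + \mathbb{E}[\vw\, s] \;=\; \mathbb{E}[g\, s]\,\bar{\vx}+\mathbb{E}[\vw]\,\mathbb{E}[s] \;=\; \mathbb{E}[g\, s]\,\bar{\vx},
\end{equation*}
where I used $\mathbb{E}[\vw]=\vzero$. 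For the remaining scalar term, apply the tower property with the definition $\eta(g)=\mathbb{E}[s\mid g]$:
\begin{equation*}
\mathbb{E}[g\, s] \;=\; \mathbb{E}\!\left[g\,\mathbb{E}[s\mid g]\right] \;=\; \mathbb{E}_{g\sim\mathcal{N}(0,1)}[g\,\eta(g)] \;=\; \lambda.
\end{equation*}
This establishes $\mathbb{E}[(\mathbf{\Phi}^\top)_j s_j]=\lambda\bar{\vx}$, the first identity.

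For the second identity, I would observe that the rows of $\mathbf{\Phi}_2$ are precisely the rows $\phi_j$ for which $s_j\neq 0$, an event of probability $p$ by hypothesis. Since $\phi_j s_j=\vzero$ whenever $s_j=0$, we have
\begin{equation*}
\lambda\bar{\vx} \;=\; \mathbb{E}[\phi_j s_j] \;=\; \mathbb{E}\!\left[\phi_j s_j\,\mathbf{1}_{\{s_j\neq 0\}}\right] \;=\; p\,\mathbb{E}\!\left[\phi_j s_j \,\Big|\, s_j\neq 0\right],
\end{equation*}
and the conditional expectation on the right is exactly $\mathbb{E}[(\mathbf{\Phi}_2^\top)_j(\vs_2)_j]$. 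Dividing by $p$ yields the claim.

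The main obstacle, and the step I would be most careful about, is the conditional independence argument that makes $\mathbb{E}[\vw\, s]=\vzero$: one must pin down that $s$ is a function of $g$ and an exogenous noise $n_j$ that is independent of $\phi$, so that $\vw$ (being a deterministic function of $\phi$ orthogonal to $g$) is independent of $s$. This is implicit in Assumptions 1 and 2 together with the definition of $\eta$, but it is the piece that upgrades the Plan--Vershynin style argument from noiseless one-bit to the present mixed saturation setting; the rest is bookkeeping.
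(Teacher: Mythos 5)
Your proposal is correct and follows the standard Plan--Vershynin argument that the paper itself invokes (the orthogonal decomposition $\phi = g\bar{\vx}+\vw$ with $g=\phi^\top\bar{\vx}$ independent of $\vw$, the tower property giving $\mathbb{E}[gs]=\lambda$, and conditioning on the saturation event of probability $p$ for the second identity). The one point you flag as delicate---that $s_j$ is a function of $(g,n_j)$ with $n_j$ exogenous, so that $\vw$ is independent of $s_j$---is handled correctly, and this is the same route the paper takes.
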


Let
$$\sigma=\max\left\{\frac{\epsilon}{M_1}, \frac{\gamma}{M_2}\right\}.$$
Then based on Lemma 1, we prove:
\begin{lemma}\label{lemma2}
Suppose Assumptions 1 and 2 hold and $t$ is given.
With probability at least $1-e^{1-t}$, the following holds:
\begin{equation*}
\begin{aligned}
\left\|\frac{\gamma}{M_2}\mathbf{\Phi}_2^{\top}\mathbf{s}_2-\frac{1}{M_1}\mathbf{\Phi}_1^{\top}\left(\mathbf{\Phi}_1\bar{\vx}-\mathbf{y}_1\right)-\frac{\gamma\lambda\bar{\vx}}{p}\right\|_{\infty}~~~~~\\
 \leq c\sqrt{\sigma M(t+\log N)}\triangleq
\frac{\nu}{2}.
\end{aligned}
\end{equation*}
\end{lemma}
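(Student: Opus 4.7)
The plan is to show that the random vector inside the $\ell_\infty$-norm is centered and concentrates around its mean coordinatewise, then to sweep across coordinates by a union bound. First I would verify that the expectation vanishes: Lemma 1 gives $\mathbb{E}\bigl[\frac{\gamma}{M_2}\mathbf{\Phi}_2^\top\vs_2\bigr] = \frac{\gamma\lambda\bar{\vx}}{p}$ (after accounting for the scaling $M_2\approx Mp$), and Assumption 1 combined with the independence of the Gaussian rows and the noise in the unsaturated block yields $\mathbb{E}\bigl[\frac{1}{M_1}\mathbf{\Phi}_1^\top(\mathbf{\Phi}_1\bar{\vx}-\vy_1)\bigr] = \vzero$. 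Therefore the whole expression has mean zero.

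Next, for a fixed coordinate $k\in\{1,\dots,N\}$, I would rewrite the $k$-th entry as a single sum over the $M$ measurements by absorbing the partition into indicator functions, namely
\begin{equation*}
Z_k \;=\; \sum_{i=1}^{M}\left[\,\frac{\gamma}{M_2}\Phi_{ik}s_i\;-\;\frac{1}{M_1}\Phi_{ik}\bigl((\mathbf{\Phi}\bar\vx)_i-y_i\bigr)\mathbbm{1}_{\{s_i=0\}}\right]\;-\;\frac{\gamma\lambda\bar{x}_k}{p}.
\end{equation*}
Under Assumption 2 the summands are independent, and each one is sub-Gaussian: in the saturated block the factor $(s_i)_j\in\{\pm 1\}$ multiplies a standard normal with scale $\gamma/M_2$, and in the unsaturated block the noise factor is uniformly bounded by $\epsilon$ (Assumption 1) so the summand is sub-Gaussian with scale of order $\epsilon/M_1$ times a standard normal. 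The overall sub-Gaussian parameter per term is therefore of order $\sigma=\max\{\epsilon/M_1,\gamma/M_2\}$, and summing $M$ such independent terms yields a sub-Gaussian tail with proxy variance of order $\sigma^2 M$, or equivalently an orthogonal-direction concentration $\mathbb{P}(|Z_k|>\tau)\le 2\exp(-c\tau^2/(\sigma M))$ after the appropriate moment comparison (I would use the standard Hoeffding/Bernstein inequality for products of a bounded and a Gaussian variable; a direct invocation of sub-exponential Bernstein works in the unsaturated part).

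Finally, a union bound over the $N$ coordinates together with the choice $\tau=c\sqrt{\sigma M(t+\log N)}$ gives the stated failure probability $1-e^{1-t}$ after absorbing constants. The coefficient $c$ and the precise constants come from matching the sub-Gaussian tail with the desired exponent $t+\log N$, which is standard.

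The main obstacle will be handling the \emph{data-dependent} partition into $\mathbf{\Phi}_1,\mathbf{\Phi}_2$: the denominators $M_1,M_2$ are themselves random, and the sign pattern $\vs$ is correlated with the rows of $\mathbf{\Phi}$. The cleanest way around this is to work with the unified sum above (so each indicator $\mathbbm{1}_{\{s_i=0\}}$ is just part of the $i$-th summand), use the exact identity from Lemma~1 to center the saturated contribution, and either condition on the realization of $M_1,M_2$ and invoke a concentration argument for $M_1/M\to 1-p$, $M_2/M\to p$, or redefine $\sigma$ using the deterministic surrogates $\epsilon/(M(1-p))$ and $\gamma/(Mp)$ and absorb the discrepancy into the constant $c$. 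Once this bookkeeping is done, the concentration step itself is a direct application of the sub-Gaussian maximal inequality.
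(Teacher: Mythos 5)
The paper defers the proof of Lemma~\ref{lemma2} to its supplemental document, but the route it points to (via \cite{plan2013robust} and \cite{zhu2015towards}) is exactly yours: center the vector using Lemma~\ref{lemma1} and Assumption~1, prove a coordinatewise sub-Gaussian/Bernstein tail, and take a union bound over the $N$ coordinates; your proposal matches this in structure and in every essential step, including correctly flagging the data-dependent partition as the main bookkeeping issue. One internal inconsistency to repair: you assert a summed proxy variance of order $\sigma^2 M$ but then write the tail as $\mathbb{P}(|Z_k|>\tau)\le 2\exp(-c\tau^2/(\sigma M))$, whereas a proxy variance of $\sigma^2 M$ gives $\exp(-c\tau^2/(\sigma^2 M))$ and hence the \emph{stronger} deviation $\tau = c\,\sigma\sqrt{M(t+\log N)}$; to land on the lemma's $c\sqrt{\sigma M(t+\log N)}$ you should either observe that $\sigma=\max\{\epsilon/M_1,\gamma/M_2\}\le 1$, so $\sigma^2 M\le \sigma M$ and the stated bound follows a fortiori, or run the Bernstein inequality explicitly and check which branch of $\min\{\tau^2/(M\sigma^2),\,\tau/\sigma\}$ is active. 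Either patch is one line and does not change the architecture of your argument.
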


The proofs of Lemma 1 and Lemma 2 are presented in the supplemental document. Based on these lemmas, we bound the error for M1bit-CS-L by the following theorem.

\begin{theorem}\label{theorem}
Suppose Assumptions 1 and 2 hold, $\hat{\vx}$ is the optimal solution of (\ref{origin}), and $\bar{\vx}$ is the underlying signal.

If $f(\mathbf{x})=\nu\|\mathbf{x}\|_1$, with probability at least $1-e^{1-t}$,
\begin{equation*}
\begin{aligned}
 &\|\bar{\vx}-\hat{\vx}\|_2 \leq \frac{3p\nu}{\gamma\lambda}\sqrt{\|\bar{\vx}\|_0}	= {\mathcal{O}\left(\sqrt{(\sigma MK\log N)/\gamma}\right)
    \nonumber,}
\end{aligned}
\end{equation*}

If $f(\mathbf{x})=\nu\|\mathbf{x}\|_0$, with probability at least $1-e^{1-t}$,
\begin{equation*}
\begin{aligned}
&\|\bar{\vx}-\hat{\vx}\|_2\leq
\sqrt{\frac{4p\nu}{\gamma\lambda}\|\bar{\vx}\|_0}	= {\mathcal{O}\left(\sqrt[4]{(\sigma MK^2\log N)/\gamma^2}\right)}
    \nonumber.
\end{aligned}
\end{equation*}
\end{theorem}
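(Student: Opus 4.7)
The plan is to start from the optimality inequality $F(\hat{\vx})\le F(\bar{\vx})$ (where $F$ denotes the objective of~\eqref{origin}), reduce it to a single master inequality in the error vector $\mathbf{h}:=\hat{\vx}-\bar{\vx}$, and then specialize to the two choices of $f$. Expanding the quadratic via $\|\mathbf{\Phi}_1\hat{\vx}-\mathbf{y}_1\|_2^2-\|\mathbf{\Phi}_1\bar{\vx}-\mathbf{y}_1\|_2^2 = \|\mathbf{\Phi}_1\mathbf{h}\|_2^2+2\mathbf{h}^{\top}\mathbf{\Phi}_1^{\top}(\mathbf{\Phi}_1\bar{\vx}-\mathbf{y}_1)$ and invoking the linearity of the saturation term in $\vx$, I obtain
\[
f(\hat{\vx})-f(\bar{\vx})+\tfrac{1}{2M_1}\|\mathbf{\Phi}_1\mathbf{h}\|_2^2 \le \mathbf{h}^{\top}\mathbf{d},
\]
where $\mathbf{d}:=\tfrac{\gamma}{M_2}\mathbf{\Phi}_2^{\top}\mathbf{s}_2-\tfrac{1}{M_1}\mathbf{\Phi}_1^{\top}(\mathbf{\Phi}_1\bar{\vx}-\mathbf{y}_1)$.

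I would next split $\mathbf{d}=(\mathbf{d}-\gamma\lambda\bar{\vx}/p)+\gamma\lambda\bar{\vx}/p$. By H\"older's inequality and Lemma~\ref{lemma2}, the first piece contributes at most $(\nu/2)\|\mathbf{h}\|_1$ with probability at least $1-e^{1-t}$. For the second, the feasibility $\|\hat{\vx}\|_2\le 1=\|\bar{\vx}\|_2$ forces $\|\mathbf{h}\|_2^2 \le 2-2\hat{\vx}^{\top}\bar{\vx} = -2\mathbf{h}^{\top}\bar{\vx}$, so $\frac{\gamma\lambda}{p}\mathbf{h}^{\top}\bar{\vx}\le -\frac{\gamma\lambda}{2p}\|\mathbf{h}\|_2^2$. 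Combining these with the previous display and discarding the non-negative $\|\mathbf{\Phi}_1\mathbf{h}\|_2^2$ term produces the master inequality
\[
f(\hat{\vx})-f(\bar{\vx})+\tfrac{\gamma\lambda}{2p}\|\mathbf{h}\|_2^2 \le \tfrac{\nu}{2}\|\mathbf{h}\|_1.
\]

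For $f(\vx)=\nu\|\vx\|_1$, I would apply the standard LASSO-style cone argument. Setting $S:=\mathrm{supp}(\bar{\vx})$ with $|S|=K$, the reverse triangle inequality yields $\|\hat{\vx}\|_1-\|\bar{\vx}\|_1 \ge \|\mathbf{h}_{S^c}\|_1-\|\mathbf{h}_S\|_1$, and substituting into the master inequality gives $\tfrac{\gamma\lambda}{2p}\|\mathbf{h}\|_2^2 \le \tfrac{3\nu}{2}\|\mathbf{h}_S\|_1-\tfrac{\nu}{2}\|\mathbf{h}_{S^c}\|_1 \le \tfrac{3\nu}{2}\sqrt{K}\|\mathbf{h}\|_2$; dividing by $\|\mathbf{h}\|_2$ produces the stated $\ell_1$ bound.

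The harder case is $f(\vx)=\nu\|\vx\|_0$, where no triangle-type decomposition is available. My plan is to use the Cauchy--Schwarz bound $\|\mathbf{h}\|_1 \le \sqrt{\|\mathbf{h}\|_0}\|\mathbf{h}\|_2 \le \sqrt{\|\hat{\vx}\|_0+K}\|\mathbf{h}\|_2$ on the right of the master inequality, then apply Young's inequality $uv \le \alpha u^2/2 + v^2/(2\alpha)$ with the parameter $\alpha$ tuned so that the resulting $\|\mathbf{h}\|_2^2$ term is absorbed into a strict fraction of $\gamma\lambda\|\mathbf{h}\|_2^2/(2p)$ on the left, and so that the coefficient of $\|\hat{\vx}\|_0$ appearing on the right stays strictly below $\nu$, allowing the $\nu\|\hat{\vx}\|_0$ term on the left to absorb it. In the regime $\nu p/(\gamma\lambda)\ll 1$---which is exactly the regime consistent with the asymptotic $\mathcal{O}$ statement and with the choice $\nu=2c\sqrt{\sigma M(t+\log N)}$ from Lemma~\ref{lemma2}---both absorptions succeed to leading order and what remains is $\tfrac{\gamma\lambda}{4p}\|\mathbf{h}\|_2^2 \lesssim \nu K$, which rearranges to the advertised bound. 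The delicate part is bookkeeping the constants through the Young's step so as to recover the factor $4$ in the stated inequality and verifying that the tuning regime is indeed the one in which the big-$\mathcal{O}$ estimate is non-vacuous.
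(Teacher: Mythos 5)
Your derivation of the master inequality and your treatment of the $\ell_1$ case are correct and essentially identical to the paper's argument: the paper also starts from $F(\hat\vx)\le F(\bar\vx)$, linearizes the quadratic term (it invokes convexity where you expand exactly and then discard $\frac{1}{2M_1}\|\mathbf{\Phi}_1\mathbf{h}\|_2^2$ --- same effect), splits off the mean $\gamma\lambda\bar\vx/p$ via Lemma~\ref{lemma2}, uses $\|\mathbf{h}\|_2^2\le 2(1-\bar\vx^\top\hat\vx)$, and finishes the $\ell_1$ case with the same cone/support decomposition yielding $\frac{3\nu}{2}\sqrt{K}\,\|\mathbf{h}\|_2$.

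The $\ell_0$ case is where you have a genuine gap. Your Young's-inequality absorption scheme is only sketched, and as sketched it does not prove the stated bound: (i) it requires a smallness condition on $\nu p/(\gamma\lambda)$ that is not an assumption of the theorem, so at best you would prove a conditional version of the claim; and (ii) even in that regime the bookkeeping does not close to the advertised constant --- taking the Young parameter so that half of $\frac{\gamma\lambda}{2p}\|\mathbf{h}\|_2^2$ survives, the $\|\hat\vx\|_0$ coefficient becomes $\frac{p\nu^2}{4\gamma\lambda}$, and after absorbing it into $\nu\|\hat\vx\|_0$ you are left with $\|\mathbf{h}\|_2^2\le \frac{4p\nu}{\gamma\lambda}K\bigl(1+\frac{p\nu}{4\gamma\lambda}\bigr)$, i.e., a constant strictly worse than $4$ unless $\nu p/(\gamma\lambda)\to 0$. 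The paper's route is both simpler and unconditional: since $\hat\vx$ and $\bar\vx$ both lie in the unit ball, $\|\mathbf{h}\|_\infty\le\|\mathbf{h}\|_2\le 2$, hence $\frac{\nu}{2}\|\mathbf{h}\|_1\le\nu\|\mathbf{h}\|_0$; then the purely combinatorial estimate
\begin{equation*}
\|\bar\vx\|_0-\|\hat\vx\|_0+\|\mathbf{h}\|_0
\le \|\bar\vx_T\|_0-\|\hat\vx_T\|_0-\|\hat\vx_{T^c}\|_0+\|\bar\vx_T-\hat\vx_T\|_0+\|\hat\vx_{T^c}\|_0
\le 2\|\bar\vx_T-\hat\vx_T\|_0\le 2K
\end{equation*}
gives $\frac{\gamma\lambda}{p}(1-\bar\vx^\top\hat\vx)\le 2\nu K$ directly, and the factor $4$ follows from $\|\mathbf{h}\|_2^2\le 2(1-\bar\vx^\top\hat\vx)$. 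You should replace your Young's-inequality step with this $\ell_\infty$-to-$\ell_0$ conversion and support-counting argument.
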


\begin{proof}
Since {$\bar{\vx}$} is a feasible solution of~\eqref{origin} and {$\hat{\vx}$} is the optimal solution of~\eqref{origin}, we obtain:
\begin{equation*}
\begin{aligned}
0\geq&  f\left(\hat{\vx}\right)+\frac{1}{2M_1}\|\mathbf{\Phi}_1\hat{\vx}-\vy_1\|_2^2-\frac{\gamma}{M_2}\vs_2^{\top}\left(\mathbf{\Phi}_2\hat{\vx}-\vy_2\right)\\
& -f\left(\bar{\vx}\right)-\frac{1}{2M_1}\|\mathbf{\Phi}_1\bar{\vx}-\vy_1\|_2^2+\frac{\gamma}{M_2}\vs_2^{\top}\left(\mathbf{\Phi}_2\bar{\vx}-\vy_2\right)\nonumber.
\end{aligned}
\end{equation*}
The convexity of $\frac{1}{2M_1}\|\mathbf{\Phi}_1\vx-\vy_1\|_2^2$ gives:
\begin{equation*}
\begin{aligned}
0\geq&
    f(\hat{\vx})-f(\bar{\vx})
    +\big\langle \frac{\gamma}{M_2}\mathbf{\Phi}_2^{\top}\vs_2-\frac{1}{M_1}\mathbf{\Phi}_1^{\top}\left(\mathbf{\Phi}_1\bar{\vx}-\vy_1\right),\bar{\vx}-\hat{\vx} \big\rangle\\	
 =&f(\hat{\vx})-f(\bar{\vx})
	+\left\langle \frac{\gamma\lambda\bar{\vx}}{p}, \bar{\vx}- \hat{\vx} \right\rangle\\
	&+\big\langle \frac{\gamma}{M_2}\mathbf{\Phi}_2^{\top}\vs_2-\frac{1}{M_1}\mathbf{\Phi}_1^{\top}\left(\mathbf{\Phi}_1\bar{\vx}-\vy_1\right)-\frac{\gamma\lambda\bar{\vx}}{p}, \bar{\vx}- \hat{\vx} \big\rangle\\
\geq& f(\hat{\vx})-f(\bar{\vx})
	+\frac{\gamma\lambda}{p}\left(1-\bar{\vx}^{\top}\hat{\vx}\right)\\
	&-\left\|\frac{\gamma}{M_2}\mathbf{\Phi}_2^{\top}\vs_2-\frac{1}{M_1}\mathbf{\Phi}_1^{\top}\left(\mathbf{\Phi}_1\bar{\vx}-\vy_1\right)-\frac{\gamma\lambda\bar{\vx}}{p}\right\|_{\infty}\|\bar{\vx}- \hat{\vx}\|_1\\
\geq& f(\hat{\vx})-f(\bar{\vx})
	+\frac{\gamma\lambda}{p}\left(1-\bar{\vx}^{\top}\hat{\vx}\right)
	-\frac{\nu}{2}\|\bar{\vx}- \hat{\vx}\|_1\nonumber,
\end{aligned}
\end{equation*}
where the last inequality follows from Lemma~\ref{lemma2}.

In the following, we define $T$ as the support set of $\bar \vx$ and $T^c$ be its complement, i.e. $T^c=\{1,2,\cdots,N\}\backslash T$.

First, let $f(\mathbf{x})=\nu\|\mathbf{x}\|_1$, and we have
\begin{equation}\label{analysis-1}
\begin{aligned}
\frac{\gamma\lambda}{p}\left(1-\bar{\vx}^{\top}\hat{\vx}\right) \leq \nu\|\bar{\vx}\|_1-\nu\|\mathbf{\hat{\vx}}\|_1
	+\frac{\nu}{2}\|\bar{\vx}- \hat{\vx}\|_1.
\end{aligned}
\end{equation}
Thus, using $T$ and $T^c$, we obtain
\begin{equation*}
\begin{aligned}
&\frac{\gamma\lambda}{p}\left(1-\bar{\vx}^{\top}\hat{\vx}\right)\\
\leq&\nu\|\bar{\vx}_{T}\|_1-\nu\|\mathbf{\hat{\vx}}_{T}\|_1-\nu\|\mathbf{\hat{\vx}}_{T^c}\|_1
	+\frac{\nu}{2}\|\bar{\vx}_{T}- \hat{\vx}_{T}\|_1+\frac{\nu}{2}\|\hat{\vx}_{T^c}\|_1\\
	\leq& \frac{3\nu}{2}\|\bar{\vx}_{T}- \hat{\vx}_{T}\|_1
	-\frac{\nu}{2}\|\hat{\vx}_{T^c}\|_1
\leq \frac{3\nu}{2}\|\bar{\vx}- \hat{\vx}\|_2\sqrt{\|\bar{\vx}\|_0}.
\end{aligned}
\end{equation*}
Therefore,
\begin{equation*}
\begin{aligned}
\|\bar{\vx}-\hat{\vx}\|_2^2
 \leq2\left(1-\bar{\vx}^{\top}\hat{\vx}\right)
\leq \frac{3p\nu}{\gamma\lambda}\|\bar{\vx}- \hat{\vx}\|_2\sqrt{\|\bar{\vx}\|_0},
\nonumber
\end{aligned}
\end{equation*}
which implies $\|\bar{\vx}-\hat{\vx}\|_2 \leq 3p\nu\sqrt{\|\bar{\vx}\|_0}/(\gamma\lambda)$.

Next, let $f(\mathbf{x})=\nu\|\mathbf{x}\|_0$ with the same $\nu$ in Lemma~\ref{lemma2}.
We have a similar inequality to~\eqref{analysis-1}:
\begin{equation*}
\begin{aligned}
\frac{\gamma\lambda}{p}\left(1-\bar{\vx}^\top\hat{\vx}\right) \leq &\nu\|\bar{\vx}\|_0-\nu\|\mathbf{\hat{\vx}}\|_0
	+\frac{\nu}{2}\|\bar{\vx}- \hat{\vx}\|_1.
\end{aligned}
\end{equation*}
From the definiton of $\bar\vx$ and $\hat\vx$, we have $\|\bar{\vx}-\hat{\vx}\|_2\leq 2$ and
\begin{equation*}
\begin{aligned}
&\frac{\gamma\lambda}{p}\left(1-\bar{\vx}^\top\hat{\vx}\right)
\leq\nu\|\bar{\vx}\|_0-\nu\|\mathbf{\hat{\vx}}\|_0
	+\nu\|\bar{\vx}- \hat{\vx}\|_0\\	
\leq&\nu\|\bar{\vx}_{T}\|_0-\nu\|\mathbf{\hat{\vx}}_{T}\|_0-\nu\|\mathbf{\hat{\vx}}_{T^c}\|_0
	+\nu\|\bar{\vx}_{T}- \hat{\vx}_{T}\|_0+\nu\|\hat{\vx}_{T^c}\|_0\\	
\leq&2\nu\|\bar{\vx}_{T}- \hat{\vx}_{T}\|_0
\leq2\nu\|\bar{\vx}\|_0\nonumber.
\end{aligned}
\end{equation*}
Thus, we obtain
\begin{equation*}
\begin{aligned}\textstyle
\|\bar{\vx}-\hat{\vx}\|_2^2
 \leq2\left(1-\bar{\vx}^{\top}\hat{\vx}\right)
\leq \frac{4p\nu}{\gamma\lambda}\|\bar{\vx}\|_0\nonumber,
\end{aligned}
\end{equation*}
which implies that $\|\bar{\vx}-\hat{\vx}\|_2\leq\sqrt{{4p\nu\|\bar{\vx}\|_0}/({\gamma\lambda})}$.
Then the theorem is proved.
\end{proof}

\section{Fast Algorithms for Convex and Non-convex Penalties}\label{sec:alg}

In this section, we design fast algorithms for both convex and non-convex penalties in the framework of the alternating direction method of multipliers (ADMM) to solve~\eqref{origin}. Introducing an auxiliary vector $\vz$ and an additional constraint $\vx-\vz=\vzero$, we have the following equivalent problem of~\eqref{origin}:
\begin{equation*}
\begin{aligned}
\Min\limits_{\vx,\vz}~&  f(\mathbf{z})-\frac{\gamma}{M_2}\mathbf{s}_2^{\top}\left(\mathbf{\Phi}_2\mathbf{z}-\mathbf{y}_2\right)+\frac{1}{2M_1}\|\mathbf{\Phi}_1\mathbf{x}-\mathbf{y}_1\|_2^2\\
\mathrm{s.t.}~& \textstyle \|\mathbf{z}\|_2\leq C,\quad\quad\mathbf{x}-\mathbf{z}=\mathbf{0}.
\end{aligned}
\end{equation*}
The corresponding augmented Lagrangian is
\begin{equation*}
\begin{aligned}
\mathcal{L}(\mathbf{x},\mathbf{z};&~\textstyle \bm{\alpha})=f(\mathbf{z})-\frac{\gamma}{M_2}\mathbf{s}_2^{\top}\left(\mathbf{\Phi}_2\mathbf{z}-\mathbf{y}_2\right)+\mathbb{I}_C(\mathbf{z})\\
& \textstyle +\frac{1}{2M_1}\|\mathbf{\Phi}_1\mathbf{x}-\mathbf{y}_1\|_2^2+\bm{\alpha}^{\top}\left(\mathbf{x}-\mathbf{z}\right)+\frac{\rho}{2}\|\mathbf{x}-\mathbf{z}\|_2^2,
\end{aligned}
\end{equation*}
where $\mathbb{I}_C(\mathbf{z})$ is the indicator function
{returning} 0 if $\|\mathbf{z}\|_2\leq C$ and $+\infty$ otherwise.
Then we establish the following two subproblems to update $\vx$ and $\vz$, respectively.
\begin{enumerate}
	\item
	$\mathbf{x}$-subproblem:
	\begin{align*}
	\textstyle \Min\limits_\mathbf{x}~\frac{1}{2M_1}\left\|\mathbf{\Phi}_1\mathbf{x}-\mathbf{y}_1\right\|_2^2+\bm{\alpha}^\top\mathbf{x}+\frac{\rho}{2}\mathbf{x}^{\top}\mathbf{x}-\rho\mathbf{z}^{\top}\mathbf{x}\nonumber.
	\end{align*}
	It is a quadratic problem, and its solution is
	\begin{align*}
	\textstyle \mathbf{x}=\left(\frac{1}{M_1}\mathbf{\Phi}_1^{\top}\mathbf{\Phi}_1+\rho\vI\right)^{-1}\left(\frac{1}{M_1}\mathbf{\Phi}_1^\top\mathbf{y}_1-\bm{\alpha}+\rho \mathbf{z}\right)\nonumber.
	\end{align*}
	
	\item
	$\mathbf{z}$-subproblem:
	\begin{equation*}
	\begin{aligned}
\Min\limits_\mathbf{z}~&\textstyle f(\mathbf{z})-\frac{\gamma}{M_2}\mathbf{s}_2^{\top}\mathbf{\Phi}_2\mathbf{z}+\mathbb{I}_C(\mathbf{z})-\bm{\alpha}^{\top}\mathbf{z}\\
	&\textstyle +\frac{\rho}{2}\mathbf{z}^{\top}\mathbf{z}-\rho\mathbf{x}^{\top}\mathbf{z}\nonumber,
	\end{aligned}
	\end{equation*}
	which can be reformulated as:
	\begin{equation}
	\begin{aligned}	
	\Min\limits_\vz~&\textstyle f(\vz)- \left\langle \frac{\gamma}{M_2}\mathbf{\Phi}_2^{\top}\vs_2+ \bm{\alpha}+ \rho \vx, \vz\right\rangle +{\rho\over2}\|\vz\|_2^2\\
	\mathrm{s.t.} ~&\textstyle \|\vz\|_2\leq C\nonumber.
	\end{aligned}
	\end{equation}
\end{enumerate}

According to~\cite{huang2018nonconvex}, analytical solutions exist for many convex and nonconvex penalty functions.
Some examples are:
\begin{itemize}
\item {\textbf{$\ell_1$ norm (L1):}} $f(\vz)=\nu\|\vz\|_1$.
\item {\textbf{$\ell_0$ penalty (L0):}} $f(\vz)=\nu\|\vz\|_0$.
\item {\textbf{minimax concave penalty (MCP):}} $f(\mathbf{z})=\sum_{i=1}^Ng_{\nu,b}(z_i)$, where $g_{\nu,b}(z)$ is defined as:
	\begin{equation*}
	\begin{aligned}
	g_{\nu,b}(z)=\begin{cases}
	\nu|z|-z^2/(2b),\quad &\mathrm{if\;}|z|\leq b\nu,\\
	b\nu^2/2,\quad &\mathrm{if\;}|z|> b\nu.
	\end{cases}\nonumber
	\end{aligned}
	\end{equation*}
	
\item {\textbf{nonconvex sorted $\ell_1$ norm (sL1):}} $f(\mathbf{z})=\nu\sum_{i=1}^Nw_i|z_{[i]}|$, where $w_1\geq \dots\geq w_N\geq 0$, and $\{z_{[i]}\}_{i=1}^N$ is a permutation of $\{z_i\}_{i=1}^N$ such that $|z_{[1]}|\leq\dots\leq |z_{[N]}|$.

\end{itemize}
More examples can be found in~\cite{huang2018nonconvex}.

The ADMM algorithm is described in Alg.~\ref{algorithm1}.
\renewcommand{\algorithmicrequire}{\textbf{Input}}
\renewcommand{\algorithmicensure}{\textbf{Output}}
\begin{algorithm}[H]
\begin{algorithmic}[1]
\caption{ADMM for M1bit-CS-L}
\label{algorithm1}
\REQUIRE
$\mathbf{\Phi}_1,~\mathbf{y}_1,~\mathbf{\Phi}_2,~\mathbf{y}_2, C, \gamma$
\ENSURE $\mathbf{x}$
\STATE  initialize $\mathbf{z}=\vzero,~\bm{\alpha}=\mathbf{0},~\rho>0$
\REPEAT
     \STATE $\mathbf{x}:=(\mathbf{\Phi}_1^{\top}\mathbf{\Phi}_1/M_1+\rho)^{-1}(\mathbf{\Phi}_1\mathbf{y}_1/M_1-\bm{\alpha}+\rho\mathbf{z})$
     \STATE Solve the $\vz$-subproblem using~\cite{huang2018nonconvex}
     \STATE $\bm{\alpha}:=\bm{\alpha}+\rho(\mathbf{x}-\mathbf{z})$
 \UNTIL{the stopping criteria is satisfied}
\end{algorithmic}
\end{algorithm}

{ADMM is also applied in [11] and [12] for RDCS and M1bit-CS, respectively, but since the corresponding subproblems with both hard constraints and hinge losses do not have analytical updates, the algorithm for M1bit-CS-L, even with non-convex penalties, is much faster than both RDCS and M1bit-CS.}

\section{Numerical Experiments}\label{sec:num}
We generate the data in the following steps:
i) generate a $K$-sparse signal in $\mathbb{R}^N$ with nonzero components following the normal distribution;
ii) normalize the signal such that $\|\bar{\vx}\|_2 = 1$;
iii) generate a sensing matrix $\mathbb{R}^{M\times N}$ with elements following $\mathcal{N}(0,1)$ independently;
iv) add independent Gaussian noise to the $M$ measurements, where the noise level $s_n$ is the ratio of the variance of the noise to that of the noise-free measurements;
v) set the saturation thresholds such that $M_2$ measurements are saturated.

The task is to recover the sparse signal $\bar{\vx}$ from $M_1$ unsaturated and $M_2$ saturated measurements.
The proposed Alg.~\ref{algorithm1} with four sparse penalties discussed previously will be compared with RDCS, M1bit-CSC, and LASSO~\cite{Tibshirani2011Regression}.
There is a parameter to tune the sparsity in each {algorithm}.
{For RDCS, M1bit-CSC, LASSO, and Alg.1-L1, we use the same value, which is chosen by cross-validation based on LASSO. For other methods, we choose parameters to make the numbers of non-zero compontents are no more than that of  $\ell_1$-norm minimization. }
The signal-to-noise ratio (SNR) and the angular error (AE) are used as error metrics, and they are defined as
\begin{align*}
\mathrm{SNR}(\bar{\vx}, \hat{\vx})&=10\log_{10}\left({\|\bar{\vx}\|_2^2}/{\|\bar{\vx}-\hat{\vx}\|^2_2}\right),\\
\mathrm{AE}&=\mathrm{arccos}\left({\langle \bar{\vx}, \hat{\vx} \rangle}/({\|\bar{\vx}\|_2\|\hat{\vx}\|_2})\right)/{\pi},
\end{align*}
where $\bar{\vx}$ is the true signal and $\hat{\vx}$ is the recovered one. All the experiments are conducted on Matlab R2016b in Windows 7 with Core i5-3.20 GHz and 4.0 GB RAM.

First, we set $K=100,~N=1000,~M=500,~ s_n=10$ and vary the saturation ratio from $0\%$ to $50\%$. The average SNR and AE over 100 trials are shown in Fig. \ref{K=100N=1000M=500}.
The disadvantage of LASSO is obvious as the saturated ratio increases.
Once the number of unsaturated measurement is insufficient, the accuracy dramatically drops.
{In contrast}, the declines in accuracy of other methods are relatively slow due to the knowledge obtained from saturation measurements.
The signal recovery accuracy of M1bit-CSC and Alg.1-L1 are similar, which coincides with our analysis in Section II that using linear loss has little negative impact on performance.
However, as shown in Table \ref{table_time},  Alg.~1 is generally 10 times faster than RDCS and M1bit-CSC because of the analytical update.

\begin{figure}[!htb]
\centering
\subfloat[]{\includegraphics[width=0.45\textwidth]{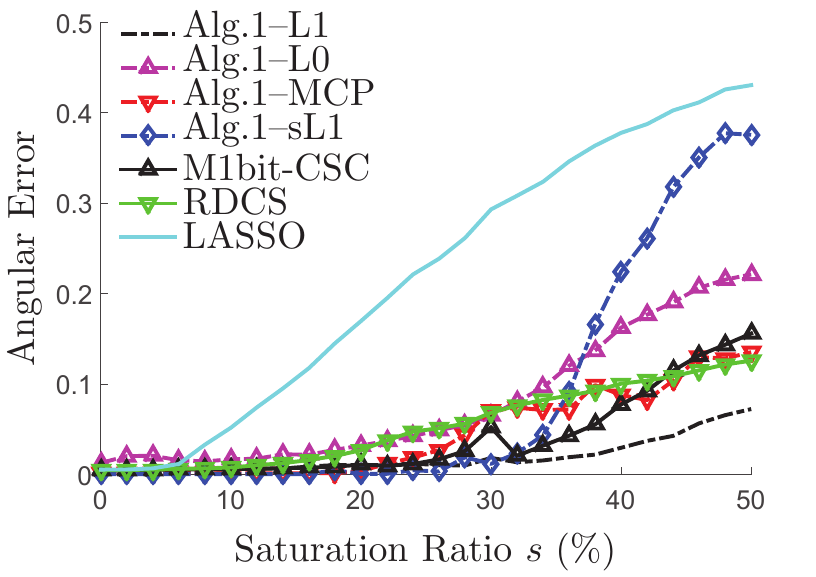}
\label{K=100N=1000M=500-AE}}
\subfloat[]{\includegraphics[width=0.45\textwidth]{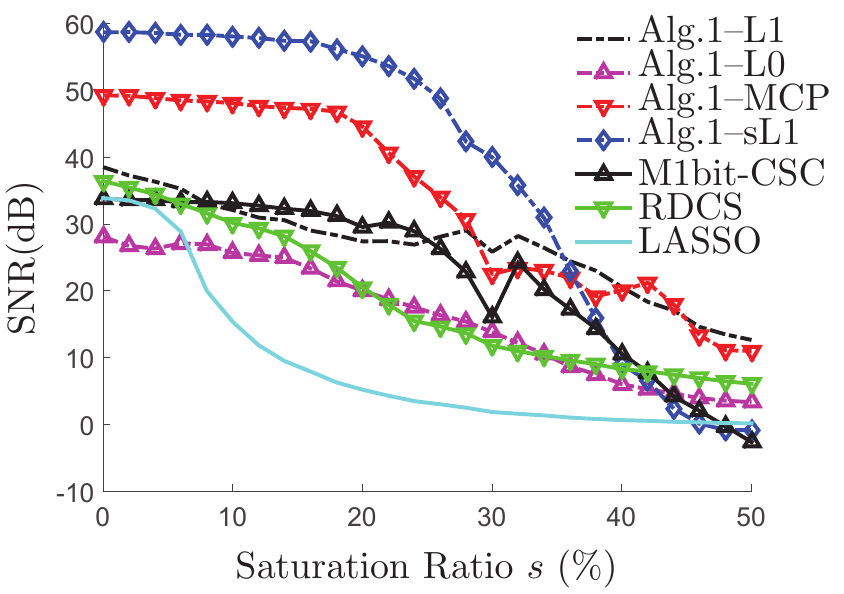}
\label{K=100N=1000M=500-SNR}}
\caption{
(a) AE and (b) SNR averaged over 100 trials for different saturation ratio ($K=100,~N=1000,~M = 500,~s_n=10$).}
\label{K=100N=1000M=500}
\end{figure}

\begin{table}[htbp]
\renewcommand{\arraystretch}{1.3}
\newcommand{\tabincell}[2]{\begin{tabular}{@{}#1@{}}#2\end{tabular}}
\caption{Average computational time with different $M,\;N$ when $K=100,~s_n = 10,~s = 10\%$.}
\label{table_time}
\centering
\begin{tabular}{r|r|r|r|r|r}
\toprule
\bfseries Methods
& \tabincell{c}{\bfseries M=500\\ \bfseries N=1000}
& \tabincell{c}{\bfseries M=1000\\ \bfseries N=1000}
& \tabincell{c}{\bfseries M=500\\ \bfseries N=2000}
& \tabincell{c}{\bfseries M=1000\\ \bfseries N=2000}
& \tabincell{c}{\bfseries M=1500\\ \bfseries N=2000}\\
\midrule
\bfseries LASSO & 0.0121 s& 0.0436 s& 0.0296 s& 0.1333 s& 0.1563 s\\
\bfseries RDCS & 0.9355 s& 0.5129 s& 8.5300 s& 7.9630 s& 5.5730 s\\
\bfseries M1bit-CSC & 0.9627 s& 1.0500 s& 8.5410 s& 9.2600 s& 8.7560 s\\
\bfseries Alg.1--sL1 & 0.0929 s& 0.1340 s& 0.3713 s& 0.5177 s& 0.7089 s\\
\bfseries Alg.1--MCP & 0.1306 s& 0.1663 s& 0.5907 s& 0.7127 s& 0.7265 s\\
\bfseries Alg.1--L0 & 0.1073 s& 0.1430 s& 0.5604 s& 0.6758 s& 0.6958 s\\
\bfseries Alg.1--L1 & 0.1004 s& 0.1375 s& 0.5548 s& 0.6671 s& 0.6854 s\\
\bottomrule
\end{tabular}
\end{table}

With the use of nonconvex penalties, the reconstruction performance is significantly improved by enhancing the sparsity, which is shown in Fig. 2 with $K=100,~N=1000,~s_n=10,~s=15\%$ and changing number of measurements.
Alg.1-sL1 and Alg.1-MCP outperform other methods on both AE and SNR except in the case of insufficient $M$, where no method can recover reasonable signals.
The $\ell_0$-norm is the true sparsity measurement.
However, its optimization
is easy to be trapped in a bad local optimum.
Thus, the average performance of Alg.1-L0 is not very good.
In Fig. 2, one can also observe the effectiveness of using saturated information.
For example, when the number of total measurements is 800, with saturation ratio being 15\%, LASSO actually uses 680 unsaturated measurements.
However, with 680 measurements, including both saturation ones and unsaturated ones, other models give comparable results.

\begin{figure}[!htb]
\centering

\subfloat[]{\includegraphics[width=0.45\textwidth]{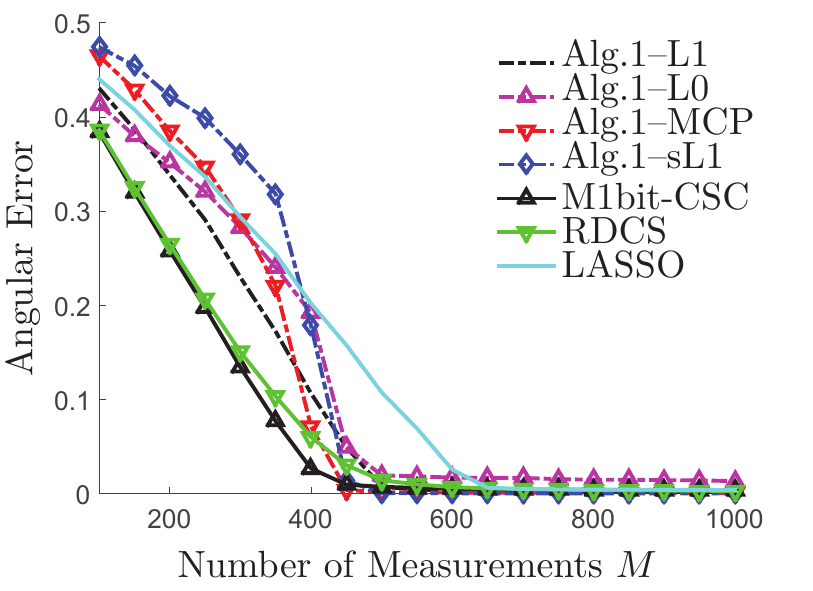}\label{N=1000K=100S=0.15-AE}}
\subfloat[]{\includegraphics[width=0.45\textwidth]{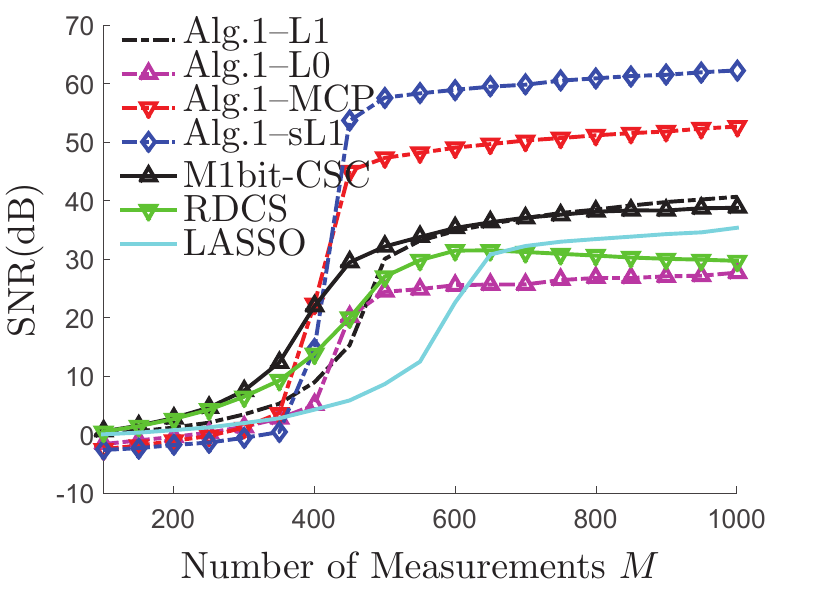}
\label{N=1000K=100S=0.15-SNR}}
\caption{ (a) AE and (b) SNR averaged over 100 trials for different numbers of measurements ($N=1000,~K=100,~s_n = 10,~s = 15\%$).}
\label{N=1000K=100S=0.15}
\end{figure}

\iftrue
Last, we evaluate the performance of these methods when the sparsity changes in Fig.~\ref{S=0.15N=1000M=500} with $N=1000,\;M=500,\;s_n=10,\;s=15\%$.
The performance again confirms the effectiveness of the proposed algorithms: i) the use of linear loss for one-bit information does not decrease the reconstruction performance; ii) the use of suitable non-convex penalties does improve the reconstruction quality.

\begin{figure}[!htb]
\centering

\subfloat[]{\includegraphics[width=0.45\textwidth]{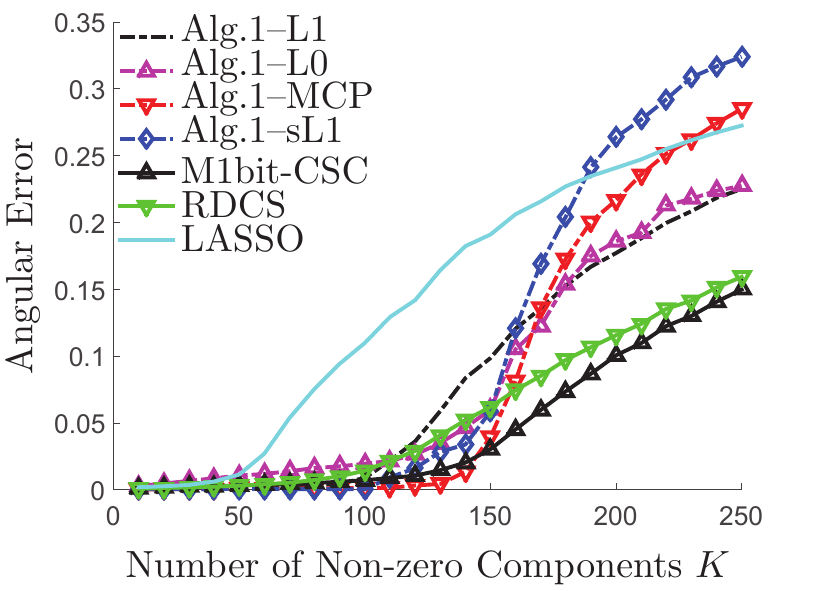}
\label{S=0.15N=1000M=500-AE}}
\subfloat[]{\includegraphics[width=0.45\textwidth]{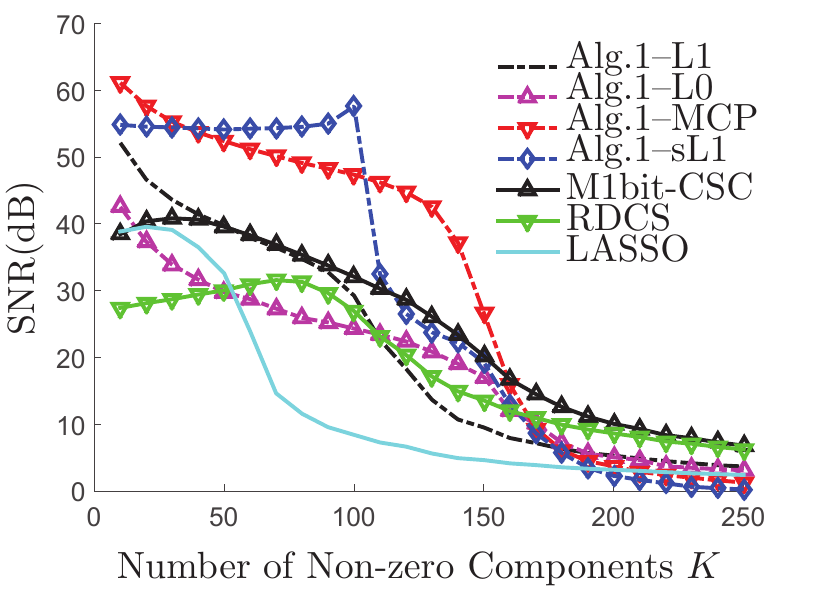}
\label{S=0.15N=1000M=500-SNR}}
\caption{(a) AE and (b) SNR averaged over 100 trials for different numbers of non-zero components ($N=1000,~M = 500,~s_n = 10,~s = 15\%$).}
\label{S=0.15N=1000M=500}
\end{figure}
\fi

\section{Conclusion}\label{sec:con}
To recover sparse signal from sensing systems with saturation, the information contained in the saturated part is very important.
We
{propose} minimizing the linear loss for saturation consistency.
Linear loss can be efficiently minimized by the proposed algorithm, and it allows the use of non-convex penalties to further enhance the sparsity.
The error estimation given in this letter also theoretically guarantees the good performance of linear loss.
Numerical experiments indicate the good performance of the proposed method on both accuracy and efficiency.

\newpage




\ifCLASSOPTIONcaptionsoff
  \newpage
\fi



%



\end{document}